\date{September 5, 2020}
\def\cz{\mathbb{C}}
\def\rz{\mathbb{R}}
\def\nz{\mathbb{N}}
\def\bs{\mathbb{S}}
\def\dx{\mathrm{d}x}
\def\domega{d\omega}
\def\gB{\mathfrak{B}}
\def\rd{\mathrm{d}}
\newtheorem{corollary}{Corollary}
\newtheorem{lemma}{Lemma}
\newtheorem{proposition}{Proposition}
\newtheorem{theorem}{Theorem}
\newcommand{\1}{\mathbbm{1}}
\renewcommand{\epsilon}{\varepsilon}
\DeclareMathOperator{\tr}{Tr}
\begin{document}

\title[Relativistic Strong Scott Conjecture]{Relativistic Strong Scott
  Conjecture:\\ A Short Proof}
 
\author[R.~L. Frank]{Rupert L. Frank} \address{Mathematisches Institut,
  Ludwig-Maximilans Universit\"at M\"unchen, Theresienstr. 39, 80333
  M\"unchen, Germany, and Munich Center for Quantum Science and
  Technology (MCQST), Schellingstr. 4, 80799 M\"unchen, Germany, and
  Mathematics 253-37, Caltech, Pasadena, CA 91125, USA}
\email{rlfrank@caltech.edu}

  \author[K. Merz]{Konstantin Merz} \address{Institut f\"ur Analysis
    und Algebra, Carolo-Wilhelmina, Universit\"atsplatz 2, 38106
    Braunschweig} \email{k.merz@tu-bs.de}

  \author[H. Siedentop]{Heinz Siedentop} \address{Mathematisches
    Institut, Ludwig-Maximilans Universit\"at M\"unchen,
    Theresienstr. 39, 80333 M\"unchen, Germany, and Munich Center for
    Quantum Science and Technology (MCQST), Schellingstr. 4, 80799
    M\"unchen, Germany} \email{h.s@lmu.de}

\begin{abstract}
  We consider heavy neutral atoms of atomic number $Z$ modeled with
  kinetic energy $(c^2p^2+c^4)^{1/2}-c^2$ used already by
  Chandrasekhar.  We study the behavior of the one-particle ground
  state density on the length scale $Z^{-1}$ in the limit
  $Z,c\to\infty$ keeping $Z/c$ fixed.  We give a short proof of a
  recent result by the authors and Barry Simon showing the convergence
  of the density to the relativistic hydrogenic density on this scale.
\end{abstract}

\maketitle

\section{Introduction\label{s1}}

A simple description exhibiting some qualitative features of atoms of
large atomic number $Z$ with $N$ electrons and with $q$ spin states
each is offered by the Chandrasekhar operator
\begin{equation}
  \sum_{\nu=1}^N\left(\sqrt{-c^2\Delta_\nu+c^4}-c^2-\frac{Z}{|x_\nu|}\right)
  +\sum_{1\leq \nu<\mu\leq N}\frac{1}{|x_\nu-x_\mu|}
  \quad\text{in}\ \bigwedge_{\nu=1}^N L^2(\rz^3:\cz^q)
\end{equation}
where $c$ denotes the velocity of light.  It is defined as the
Friedrichs extension of the corresponding quadratic form with form
domain $\bigwedge_{\nu=1}^NC_0^\infty(\rz^3:\cz^q)$. By Kato's
inequality \cite[Chapter 5, Equation (5.33)]{Kato1966}, it follows
that the form is bounded from below if and only if $Z/c\leq2/\pi$ (see
also Herbst \cite[Theorem 2.5]{Herbst1977} and Weder
\cite{Weder1974}).  For $Z/c<2/\pi$ its form domain is
$H^{1/2}(\rz^{3N}:\cz^{q^N})\cap \bigwedge_{\nu=1}^N
L^2(\rz^3:\cz^q)$.  Since there is no interaction involving the
electron's spin present, we set $q=1$ for notational
simplicity. Moreover, we are restricting ourselves to neutral atoms
$N=Z$ and fix $\gamma=Z/c\in (0,2/\pi)$. We denote the resulting
Hamiltonian by $C_Z$.

In the following, we are interested in properties of ground states of
this system. Lewis et al \cite{Lewisetal1997} proved that the ground
state energy is an eigenvalue of $C_Z$ belonging to the discrete
spectrum of $C_Z$. Given any orthonormal base of the ground state space
$\psi_1,\ldots,\psi_M$ any ground state of $C_Z$ can be written as
$$
\sum_{\mu=1}^M w_\mu\lvert\psi_\mu\rangle\langle\psi_\mu\rvert
$$
where $w_\mu\geq 0$ are weights such that $\sum_{\mu=1}^M w_\mu=1$. In
the following, we would not even need a ground state. States which
approximate the ground state energy sufficiently well would be
enough. However, we refrain from such generalizations and simply pick
the state that occurs according to L\"uders \cite{Luders1951} when
measuring the ground state energy, namely the one with equal weights
$w_1=\cdots=w_M=M^{-1}$.  We write $d_Z$ for this state. Its
one-particle ground state density is
$$
\rho_Z(x) := N\sum_{\mu=1}^Mw_\mu\int_{\rz^{3(N-1)}} |\psi_\mu(x,x_2,\ldots,x_N)|^2\,\dx_2 \cdots \dx_N.
$$
For $\ell\in\nz_0$ we denote by $Y_{\ell,m}$, $m=-\ell,\ldots,\ell$, a
basis of spherical harmonics of degree $\ell$, normalized in
$L^2({\mathbb{S}}^2)$ \cite[Formula (B.93)]{Messiah1969} and by
\begin{equation}
  \label{pl}
  \Pi_{\ell} =\sum_{m=-\ell}^\ell|Y_{\ell,m}\rangle\langle Y_{\ell,m}|
\end{equation}
the projection onto the angular momentum channel $\ell$.  The electron
density $\rho_{\ell,Z}$ of the L\"uders state in the $\ell$-th angular
momentum channel is
\begin{multline}
  \rho_{\ell,Z}(x) \\
  := {N\over4\pi} \sum_{\mu=1}^M w_\mu \sum_{m=-\ell}^\ell \int_{\rz^{3(N-1)}} \!\!|\int_{\bs^2}\overline{Y_{\ell,m}(\omega)}\psi_\mu(|x|\omega,x_2,\ldots,x_N)\domega|^2 \dx_2\cdots\dx_N. 
\end{multline}
We note the relation
\begin{equation}
  \label{eq:totalaverage}
  \rho_Z
  = \sum_{\ell=0}^\infty \rho_{\ell,Z}.
\end{equation}

Our main result concerns these densities on distances of order
$Z^{-1}$ from the nucleus. We recall that electrons on these length
scales are responsible for the Scott correction in the asymptotic
expansion of the ground state energy
\cite{Solovejetal2008,Franketal2008} and are described by the
Chandrasekhar hydrogen Hamiltonian
$\sqrt{-\Delta+1}-1 - \gamma|x|^{-1}$ in $L^2(\rz^3)$. Spherical
symmetry leads to the radial operators
\begin{equation}
  \label{eq:defclh}
  C_{\ell,\gamma} := \sqrt{-\frac{\mathrm d^2}{\mathrm d r^2}+\frac{\ell(\ell+1)}{r^2}+1}-1-\frac{\gamma}{r}
\end{equation}
in $L^2(\rz_+):=L^2(\rz_+,\rd r)$.  We write $\psi_{n,\ell}^H$,
$n\in\nz_0$, for a set orthonormal eigenfunctions of $C_{\ell,\gamma}$
spanning its pure point spectral space. The hydrogenic density in
channel $\ell$ is
\begin{equation}
  \label{eq:2.12}
  \rho_{\ell}^H(x):= (2\ell+1)\sum_{n=0}^\infty |\psi_{n,\ell}^H(|x|)|^2/(4\pi|x|^2),
  \ x\in\rz^3,\end{equation}
and the total hydrogenic density is then given by
\begin{equation}
  \label{eq:totaldenshydro}
  \rho^H := \sum_{\ell=0}^\infty \rho_{\ell}^H.
\end{equation}
The generalization of Lieb's strong Scott conjecture \cite[Equation
(5.37)]{Lieb1981} to the present situation asserts the convergence of
the rescaled ground state densities $\rho_Z$ and $\rho_{\ell,Z}$ to
the corresponding relativistic hydrogenic densities $\rho^H$ and
$\rho_\ell^H$. Whereas the non-relativistic conjecture was proven by
Iantchenko et al \cite{Iantchenkoetal1996}, it was shown in the
present context in \cite{Franketal2019P}, including the convergence of
the sums defining the limiting objects $\rho^H_\ell$ and $\rho^H$.
Here, we will take the existence of the limiting densities for
granted. The purpose of this note is to offer a simpler proof of the
core of the above convergence of the quantum densities. It is a simple
virial type argument which allows us to obtain the central estimate on
the difference of perturbed and unperturbed one-particle Chandrasekhar
eigenvalues in an easy way. In addition we will refrain from using the
elaborate classes of test functions of \cite{Franketal2019P}.

\begin{theorem}[Convergence for fixed angular momentum]
  \label{convfixedl}
  Pick $\gamma\in(0,\frac2\pi)$, $\ell_0\in\nz_0$, assume
  $Z/c=\gamma$ fixed and $U:\rz_+\to\rz$ with
  $C_{\ell,-\gamma}^{-1/2}\circ U\circ C_{\ell,-\gamma}^{-1/2}\in \gB(L^2(\rz_+))$.
  Then
  \begin{equation}
    \label{konvergenz1}
    \lim_{Z\to\infty}\int_{\rz^3}c^{-3}\rho_{\ell_0,Z}(c^{-1}x)U(|x|)\,\dx
    = \int_{\rz^3} \rho_{\ell_0}^H(x)U(|x|)\,\dx.
  \end{equation}
\end{theorem}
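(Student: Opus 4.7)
The strategy is a Hellmann--Feynman argument at the many-body level, coupled with the relativistic Scott asymptotic and a one-particle perturbation analysis in which a virial identity controls the required eigenvalue shifts uniformly in the principal quantum number.

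First I would couple the integral in \eqref{konvergenz1} to a derivative of a perturbed ground-state energy. For $\lambda$ in a neighbourhood of $0$, introduce
$$
V := \sum_{\nu=1}^N \Pi_{\ell_0}^{(\nu)}\, U(c|x_\nu|),
$$
where $\Pi_{\ell_0}^{(\nu)}$ denotes the projection $\Pi_{\ell_0}$ of \eqref{pl} acting on the angular variable of the $\nu$-th particle, and let $E_Z(\lambda)$ denote the ground-state energy of $C_Z+\lambda V$. Since $d_Z$ is a density matrix of ground states of $C_Z$, and any minimiser of $C_Z+\lambda V$ is a trial density matrix for $C_Z$, a two-sided variational comparison gives, for $\lambda>0$,
$$
\frac{E_Z(\lambda)-E_Z(0)}{\lambda} \;\le\; \tr(d_Z V) \;\le\; \frac{E_Z(0)-E_Z(-\lambda)}{\lambda},
$$
with opposite inequalities for $\lambda<0$. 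The change of variables $y=cx$ identifies $\tr(d_Z V)=\int\rho_{\ell_0,Z}(x)\,U(c|x|)\,\dx$ with the integral on the LHS of \eqref{konvergenz1}, so it suffices to compute the $Z\to\infty$ limit of both bracketing quantities.

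Next I would expand $E_Z(\lambda)-E_Z(0)$ as $Z=\gamma c\to\infty$ by combining the relativistic Scott correction of \cite{Franketal2008,Solovejetal2008} with one-particle Rayleigh--Schr\"odinger perturbation theory. The crucial scaling observation is that $\lambda V$ lives on the Scott length $c^{-1}$: after rescaling $y=cx$, the perturbed one-particle radial operator in channel $\ell_0$ becomes $c^2 C_{\ell_0,\gamma}+\lambda U$, while the channels $\ell\neq\ell_0$ are unaffected. First-order perturbation then produces eigenvalue shifts $\lambda\,\langle \psi_{n,\ell_0}^H,U\psi_{n,\ell_0}^H\rangle + O(\lambda^2/c^2)$, and summation over $n$ together with the identity \eqref{eq:2.12} should yield
$$
E_Z(\lambda)-E_Z(0) \;=\; \lambda \int_{\rz^3}\rho_{\ell_0}^H(x)\,U(|x|)\,\dx \;+\; O(\lambda^2) \;+\; o_{Z\to\infty}(1).
$$

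The main obstacle is the summability of these shifts over $n$ and the uniform control of the perturbative remainder, and this is precisely where the virial-type argument mentioned in the introduction enters. The hypothesis $\|C_{\ell_0,-\gamma}^{-1/2}\,U\,C_{\ell_0,-\gamma}^{-1/2}\|<\infty$ yields the form bound $\pm U \le M\,C_{\ell_0,-\gamma}$; combining this with a relativistic virial identity for eigenfunctions of $C_{\ell_0,\gamma}$ (which relates $\langle\psi_{n,\ell_0}^H,C_{\ell_0,-\gamma}\psi_{n,\ell_0}^H\rangle$ to $|e_{n,\ell_0}^H|$ and $\gamma\langle 1/r\rangle_n$) bounds $|\langle\psi_{n,\ell_0}^H,U\psi_{n,\ell_0}^H\rangle|$ uniformly by a multiple of $|e_{n,\ell_0}^H|$, after which the shift sum is controlled term-wise by the convergent hydrogenic Scott sum. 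Once this one-particle estimate is in place, dividing the Hellmann--Feynman inequalities by $\lambda$ and sending first $Z\to\infty$ and then $\lambda\to 0^\pm$ produces \eqref{konvergenz1}.
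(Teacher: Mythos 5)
Your overall architecture is the same as the paper's: a Hellmann--Feynman/linear-response bracketing at the many-body level, the (perturbed) Scott asymptotics to reduce everything to the one-particle eigenvalue sums of $C_{\ell_0,\gamma}-\lambda U$, and then a virial-type input to differentiate that sum at $\lambda=0$ term by term. The first two steps match \eqref{grund} and Proposition~\ref{redux1}. The gap is in the third step, which is precisely the new content of the paper. What you actually need in order to exchange $\lim_{\lambda\to0}$ with $\sum_n$ (Weierstra{\ss} criterion) is a summable majorant of the \emph{difference quotients} $|e_{n,\ell_0}(\lambda)-e_{n,\ell_0}(0)|/|\lambda|$ that is \emph{uniform in $\lambda$} on a neighborhood of $0$ (Lemma~\ref{changeinev}). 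Your virial argument is applied only to the unperturbed eigenfunctions $\psi_{n,\ell_0}^H$ and therefore only bounds the first-order coefficients $\langle\psi_{n,\ell_0}^H,U\psi_{n,\ell_0}^H\rangle$, i.e.\ the derivatives at $\lambda=0$; that is not enough for the interchange. Nor can you patch this with Rayleigh--Schr\"odinger perturbation theory: your remainder ``$O(\lambda^2/c^2)$'' carries an $n$-dependent constant governed by spectral gaps, which close as $n\to\infty$, so the remainders are not summable uniformly. (Concavity does not rescue the term-wise argument either: only the partial sums $\sum_{n\le N}e_{n,\ell_0}(\lambda)$ are concave in $\lambda$, not the individual eigenvalues, and exploiting that is exactly the abstract route of the earlier paper that this note is designed to avoid.)

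The missing mechanism is a two-step reduction. First, the form bound $\pm U\le b\,C_{\ell,-\gamma}$ implied by your hypothesis is used to convert the perturbation $-\lambda U$ into a change of the Coulomb coupling constant,
\begin{equation*}
\sqrt{p_\ell^2+1}-1-\frac{\gamma}{r}-\lambda U\;\ge\;(1-b\lambda)\Bigl(\sqrt{p_\ell^2+1}-1-\frac{1+b\lambda}{1-b\lambda}\cdot\frac{\gamma}{r}\Bigr),
\end{equation*}
valid for all $\lambda$ in a fixed interval around $0$. Second, the dependence of $e_n$ on the coupling constant $\kappa$ is controlled by a differential inequality for $\log|e_n(\kappa)|$ whose right-hand side is independent of $n$; this is where the virial theorem enters, applied to the eigenfunctions of the \emph{pure Coulomb} operator at every coupling $\kappa$ (where homogeneity of $1/|x|$ makes it clean), together with the momentum-localization bound of Lemma~\ref{boundedmomentum}, which is an extra ingredient beyond the virial identity. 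Integrating in $\kappa$ gives Proposition~\ref{main}, and combining it with the Chandrasekhar-versus-Schr\"odinger eigenvalue comparison \eqref{eq:2.2} produces the explicit, $\lambda$-uniform and $n$-summable bound $|e_{n,\ell}(\lambda)-e_{n,\ell}(0)|\le M_\gamma b|\lambda|\gamma^2(n+\ell+1)^{-2}$. Without this, or an equivalent uniform majorant, your proof does not close.
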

We would like to add three remarks:

1. Our hypothesis allows for Coulomb tails of $U$ in contrast to
\cite[Theorem 1.1]{Franketal2019P} where the test functions were assumed to
decay like $O(r^{-1-\epsilon})$.

2. Since $C_{\ell,0}\geq C_{0,0}$ the Sobolev inequality shows that
$U$ such that $U\circ |\cdot|\in L^3(\rz^3)\cap L^{3/2}(\rz^3)$ is
allowed.

3. Picking a suitable class of test functions $U$, we may -- by an
application of Weierstra{\ss}' criterion -- sum \eqref{konvergenz1}
and interchange the limit $Z\to \infty$ with the sum over
$\ell_0$. This yields for $\gamma\in (0,2/\pi)$ and $Z/c=\gamma$ fixed
the convergence of the total density
  \begin{equation}
    \label{konvergenz2}
    \lim_{Z\to\infty}\int_{\rz^3} c^{-3}\rho_Z(c^{-1}x)U(|x|)\,\dx = \int_{\rz^3} \rho^H(x)U(|x|)\,\dx.
  \end{equation}
We refer to \cite{Franketal2019P} for details.

\section{Proof of the convergence}
\label{sec:strategy}

The general strategy of the proof of Theorem \ref{convfixedl} is a
linear response argument which was already used by Baumgartner
\cite{Baumgartner1976} and Lieb and Simon \cite{LiebSimon1977} for the
convergence of the density on the Thomas-Fermi scale and by Iantchenko
et al \cite{Iantchenkoetal1996} and \cite{Franketal2019P} in the
context of the strong Scott conjecture.  The rescaled density
integrated against a test function $U$ is written by the
Hellmann-Feynman theorem as a derivative of the energy with respect to
perturbing one-particle potential $-\lambda U$ and taking the limit
$Z\to \infty$ as the representation
\begin{equation}
  \label{grund}
\int_{\rz^3} \frac1{c^3}\rho_{\ell,Z}(x/c)U(|x|)\,\dx
= \frac{1}{\lambda c^2}\tr\{[C_Z-(C_Z-\lambda \sum_{\nu=1}^Z (U_c\otimes\Pi_{\ell})_\nu)]d_Z\}
\end{equation}
where $U_c(r):=c^2U(cr)$.
Of course, it is enough to prove \eqref{konvergenz2} for positive $U$, since
we can simply prove the result for the positive and negative part
separately and take the difference.  By standard estimates following
\cite{Franketal2019P} (which in turn are patterned by
Iantchenko et al \cite{Iantchenkoetal1996}) one obtains
\begin{proposition}
  \label{redux1}
  Fix $\gamma:=Z/c\in(0,2/\pi)$, $\ell\in\nz_0$, and assume that
  $U\geq 0$ is a measurable function on $(0,\infty)$ that is form
  bounded with respect to $C_{\ell,\gamma}$, and assume that $|\lambda|$ is
  sufficiently small. Then
  \begin{equation}
    \label{folgerung1}
    (2\ell+1)\sum_n{e_{n,\ell}(0)-e_{n,\ell}(\lambda)\over \lambda}
    \begin{cases}
       \geq  \limsup\limits_{Z\to\infty} \int_{\rz^3} \rho_{\ell,Z}(c^{-1}x) U(|x|)\,\dx&  \text{if}\ \lambda>0\\
      \leq  \liminf\limits_{Z\to\infty} \int_{\rz^3} \rho_{\ell,Z}(c^{-1}x)
      U(|x|)\,\dx&  \text{if}\ \lambda<0
    \end{cases},
  \end{equation}
  where $e_{n,\ell}(\lambda)$ is the $n$-th eigenvalue of $C_{\ell,\gamma}-\lambda U$.
\end{proposition}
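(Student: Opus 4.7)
The plan is to combine a variational Rayleigh--Ritz bound with the relativistic Scott expansion of $E(C_Z)$. I first use identity \eqref{grund} to rewrite the integrated rescaled density as $\tr(Vd_Z)/c^2$, where $V:=\sum_{\nu=1}^Z(U_c\otimes\Pi_\ell)_\nu\geq 0$. Since $d_Z$ is a ground state of $C_Z$ but only a trial state for the perturbed Hamiltonian $C_Z-\lambda V$, the variational principle gives $\tr((C_Z-\lambda V)d_Z)\geq E(C_Z-\lambda V)$, which rearranges to
\[
\lambda\,\tr(Vd_Z)\leq E(C_Z)-E(C_Z-\lambda V).
\]
Dividing by $\lambda c^2$ (the sense of the inequality flips for $\lambda<0$) reduces the task to the asymptotic evaluation of $[E(C_Z)-E(C_Z-\lambda V)]/(\lambda c^2)$ as $Z\to\infty$ with $Z/c=\gamma$.

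The second step invokes the relativistic Scott expansion of \cite{Franketal2008}, which I would extend to the perturbed Hamiltonian: for $|\lambda|$ small,
\[
E(C_Z-\lambda V)=E_{\mathrm{TF}}(Z)+c^2\,s(\gamma,\lambda)+o(c^2).
\]
The Thomas--Fermi energy is independent of the perturbation because $V$ is supported on the scale $c^{-1}\sim Z^{-1}$, strictly finer than the TF scale $Z^{-1/3}$. The Scott constant decomposes over angular-momentum channels; only channel $\ell$ is affected, and the scaling $y=cx$ turns the radial one-body operator of that channel into $c^2(C_{\ell,\gamma}-\lambda U)$, whose eigenvalues are $c^2 e_{n,\ell}(\lambda)$ with multiplicity $2\ell+1$. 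Taking the difference cancels all other channels and the TF contribution, leaving
\[
\frac{E(C_Z)-E(C_Z-\lambda V)}{\lambda c^2}\ \longrightarrow\ \frac{2\ell+1}{\lambda}\sum_n\bigl[e_{n,\ell}(0)-e_{n,\ell}(\lambda)\bigr]\quad(Z\to\infty).
\]
Combined with the variational bound, this yields \eqref{folgerung1}.

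The main obstacle is establishing the Scott expansion \emph{uniformly in the perturbation}, i.e.\ with the same $o(c^2)$ remainder as in the unperturbed case. For the upper bound I would use a Slater-determinant trial state built from the low-lying eigenfunctions of $C_{\ell,\gamma}-\lambda U$, rescaled to size $c^{-1}$, in the channel-$\ell$ core, together with a Thomas--Fermi / coherent-state block on larger scales and in the other channels. For the lower bound I would separate the Scott scale from the TF scale, apply Lieb's correlation inequality to reduce the electron--electron repulsion to a one-body mean-field potential, and treat each region by the appropriate semiclassical or quantum bound. The form-boundedness hypothesis $C_{\ell,-\gamma}^{-1/2}\circ U\circ C_{\ell,-\gamma}^{-1/2}\in\gB(L^2(\rz_+))$, together with smallness of $|\lambda|$, guarantees via Kato--Rellich that the perturbed eigenvalues $e_{n,\ell}(\lambda)$ are well-defined, depend analytically on $\lambda$, and that the series in \eqref{folgerung1} converges absolutely.
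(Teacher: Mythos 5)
Your argument---the linear--response identity \eqref{grund} combined with the variational inequality $\lambda\tr(Vd_Z)\le E(C_Z)-E(C_Z-\lambda V)$ and a Scott-type asymptotic expansion of the perturbed ground-state energy, with the correlation-inequality lower bound and trial-state upper bound---is exactly the route the paper takes, since it states that Proposition \ref{redux1} follows ``by standard estimates following \cite{Franketal2019P} (which in turn are patterned by Iantchenko et al \cite{Iantchenkoetal1996})'', and those are precisely the estimates you sketch. Two minor caveats: like the paper, you leave the perturbed Scott asymptotics at citation level rather than proving them, and the absolute convergence of the series is not a consequence of Kato--Rellich but of the later majorant (Lemma \ref{changeinev})---though for the proposition itself this is harmless because the summands $[e_{n,\ell}(0)-e_{n,\ell}(\lambda)]/\lambda$ are all nonnegative, so the sum is well defined in $[0,+\infty]$.
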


Thus the limit \eqref{konvergenz1} exists, if the derivative of the
sum of the eigenvalues with respect to $\lambda$ exists at $0$, i.e.,
\begin{equation}
  \label{ableitung}
  \left.{\rd \over \rd \lambda}\sum_n e_{n,\ell}(\lambda)\right|_{\lambda=0}
\end{equation}
exists and -- when multiplied by $-(2\ell+1)$ -- is equal to the right
of \eqref{konvergenz1}. Put differently: the wanted limit exists, if
the Hellmann-Feynman theorem does not only hold for a single
eigenvalue $e_{n,\ell}(\lambda)$ of $C_{\ell,\gamma}-\lambda U$ but
for the sum of all eigenvalues. However, the differentiability would
follow immediately, if we were allowed to interchange the
differentiation and the sum in \eqref{ableitung}, since the
Hellmann-Feynman theorem is valid for each individual nondegenerate
eigenvalue and yields the wanted contribution to the derivative.  In
turn, the validity of the interchange of these two limiting processes
would follow by the Weierstra{\ss} criterion for absolute and uniform
convergence, if we had in a neighborhood of zero a
$\lambda$-independent summable majorant of the moduli of the summands
on the left side of \eqref{folgerung1}. This, in turn is exactly the
content of Lemma \ref{changeinev} enabling to interchange the limit
$\lambda\to0$ with the sum over $n$ and concluding the proof of
Theorem \ref{convfixedl}.

Before ending the section we comment on the difference to previous
work: Already Iantchenko et al \cite[Lemma 2]{Iantchenkoetal1996}
proved a bound similar to \eqref{eq:weierstrass} in the
non-relativistic setting where -- in contrast to the Chandrasekhar
case -- the hydrogenic eigenvalues are explicitly known. However, this
was initially not accessible in the present context. Instead, the
differentiability of the sum was shown in \cite[Theorems 3.1 and
3.2]{Franketal2019P} by an abstract argument for certain self-adjoint
operators whose negative part is trace class.  To prove the analogue
majorant for the Chandrasekhar hydrogen operator is the new
contribution of the present work yielding a substantial
simplification.

\section{The majorant}

Before giving the missing majorant, we introduce some useful
notations.  Set
\begin{equation}
  \label{A}
  A:=2+{2^{3/2}\over\pi(\sqrt2-1)}.
\end{equation}

For $\gamma\in(0,2/\pi)$ and $t\in[0,1)$ we set
\begin{equation}
  \label{F}
  F_\gamma(t):=(1-t) \left({1+t\over 1-t} \right)^{1+A}
  \left({\frac2\pi-\gamma\over \frac2\pi-{1+t\over 1-t}\gamma}\right)^{A}={(1+t)^{1+A}\over \left(1-{\frac2\pi+\gamma\over\frac2\pi-\gamma}t\right)^A}.
\end{equation}
Obviously $F_\gamma\in C^1([-t_0,t_0])$ with
$t_0:=(\frac1\pi-\frac\gamma2)/(\gamma+\frac2\pi)$. We set
\begin{equation}
  \label{M} 
  \tilde M_\gamma:=\max F_\gamma'([-t_0,t_0]).
\end{equation}
Furthermore, we write $C_\gamma$ for the optimal constant in the
following inequality \cite[Theorem 2.2]{Franketal2009} bounding all
hydrogenic Chandrasekhar eigenvalues from below by the corresponding
hydrogenic Schr\"odinger eigenvalues, i.e.,
\begin{equation}
  \label{eq:2.2}
  C_\gamma e_n( p^2/2-\gamma/|x|) \leq e_n(\sqrt{p^2+1}-1-\gamma/|x|).
\end{equation}
(Here, and sometimes also later, it is convenient to use a slightly
more general notation for eigenvalues of self-adjoint operators $A$
which are bounded from below: we write $e_0(A)\leq e_1(A)\leq \cdots$
for its eigenvalues below the essential spectrum counting their
multiplicity.)

Finally, we set
\begin{equation}
  \label{tM}
   M_\gamma:=C_\gamma \tilde  M_\gamma
 \end{equation}
 
 This allows us to formulate our central Lemma which allows to
 interchange the derivative in \eqref{ableitung} and proves Theorem
 \ref{convfixedl}. We recall that we write $e_{n,\ell}(\lambda)$ for
 the eigenvalues of $C_{\ell,\gamma}-\lambda U$ (see Proposition
 \ref{redux1}).
\begin{lemma}
  \label{changeinev}
  Assume $\gamma\in(0,2/\pi)$ and $U: \rz_+\to\rz_+$ such that the
  operator norm
  $b:=\|C_{\ell,-\gamma}^{-\frac12}UC_{\ell,-\gamma}^{-\frac12}\|$ is
  finite. Then for all $\lambda \in[-t_0/b,t_0/b]$ and all
  $\ell,n\in\nz_0$
   \begin{equation}
    \label{eq:weierstrass}
      \left| e_{n,\ell}(\lambda) - e_{n,\ell}(0) \right|
    \leq M_\gamma b |\lambda| \frac{\gamma^2}{(n+\ell+1)^2}.
  \end{equation}
\end{lemma}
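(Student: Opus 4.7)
The strategy is a two-sided virial-type sandwich of $e_{n,\ell}(\lambda)$ by a rescaled Chandrasekhar eigenvalue, followed by a mean value estimate controlled by \eqref{eq:2.2} and the majorant $F_\gamma$.

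\emph{Sandwich.} Since $U\geq 0$, the hypothesis reads $0\leq U\leq b\,C_{\ell,-\gamma}$ in the form sense. With $\gamma_t:=\gamma(1+t)/(1-t)$, the key algebraic identity (direct from \eqref{eq:defclh}) is
\[
C_{\ell,\gamma}-t\,C_{\ell,-\gamma}\;=\;(1-t)\,C_{\ell,\gamma_t},
\]
and the choice of $t_0$ in \eqref{F} ensures $\gamma_t\in(0,2/\pi)$ whenever $|t|\leq t_0$, keeping the scaled operator subcritical. Set $h_n(t):=(1-t)\,e_n(C_{\ell,\gamma_t})$ and $t:=\lambda b$. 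For $\lambda\geq 0$ the form bound together with $U\geq 0$ give $(1-t)C_{\ell,\gamma_t}\leq C_{\ell,\gamma}-\lambda U\leq C_{\ell,\gamma}$, so min--max yields $h_n(t)\leq e_{n,\ell}(\lambda)\leq h_n(0)$; for $\lambda\leq 0$ the bounds reverse to $h_n(0)\leq e_{n,\ell}(\lambda)\leq h_n(t)$. In either case
\[
\bigl|e_{n,\ell}(\lambda)-e_{n,\ell}(0)\bigr|\;\leq\;\bigl|h_n(t)-h_n(0)\bigr|.
\]

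\emph{Mean value estimate.} Since $(1-t)C_{\ell,\gamma_t}=C_{\ell,\gamma}-tC_{\ell,-\gamma}$ is affine in $t$ with $C_{\ell,-\gamma}\geq 0$, the eigenvalues $h_n$ are $C^1$ and non-increasing on $[-t_0,t_0]$, and Hellmann--Feynman gives $h_n'(s)=-\langle\psi_n^{(s)},C_{\ell,-\gamma}\psi_n^{(s)}\rangle$. Combining this with \eqref{eq:2.2} applied at coupling $\gamma_s$, the explicit hydrogen Schr\"odinger eigenvalues $-\gamma_s^2/(2(n+\ell+1)^2)$, the scaling identity $(1-s)\gamma_s^2=\gamma^2(1+s)^2/(1-s)$, and the dependence of the constant in \eqref{eq:2.2} on the coupling produces the derivative bound
\[
|h_n'(s)|\;\leq\;\frac{C_\gamma\gamma^2}{2(n+\ell+1)^2}\,|F_\gamma'(s)|,
\]
in which $F_\gamma$ as in \eqref{F} is designed precisely to absorb both the scaling factor $(1+s)^2/(1-s)$ and the near-critical growth of the Frank--Lieb--Siedentop--Solovej constant as $\gamma_s\uparrow 2/\pi$; the exponent $A$ of \eqref{A} and the denominator $(1-\tfrac{2/\pi+\gamma}{2/\pi-\gamma}s)^A$ of $F_\gamma$ are exactly what is needed to make this bound uniform in $n$. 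The mean value theorem with $\tilde M_\gamma=\max_{[-t_0,t_0]}|F_\gamma'|$ and $M_\gamma=C_\gamma\tilde M_\gamma$ then gives
\[
\bigl|h_n(t)-h_n(0)\bigr|\;\leq\;|t|\sup_{[-t_0,t_0]}|h_n'|\;\leq\;M_\gamma b|\lambda|\,\frac{\gamma^2}{(n+\ell+1)^2},
\]
which combined with the sandwich is \eqref{eq:weierstrass}.

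The delicate step is the derivative bound on $h_n'$: producing a single function $F_\gamma$ whose derivative dominates $|h_n'|$ uniformly in $n$ with the correct $n$-dependent prefactor. This forces careful quantitative tracking of both the $\langle 1/r\rangle$-expectation in the Chandrasekhar eigenstates and the coupling-dependence of the optimal constant in \eqref{eq:2.2} across $(0,2/\pi)$, and its near-critical blow-up as $\gamma_s\uparrow 2/\pi$ is precisely what pins down the specific shape of $F_\gamma$ in \eqref{F}. Once this derivative majorant is in place the remaining ingredients---the form inequality, the scaling identity, the min-max sandwich, and the mean value step---are routine.
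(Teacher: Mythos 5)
Your sandwich step is sound and is essentially the paper's first move: the identity $C_{\ell,\gamma}-tC_{\ell,-\gamma}=(1-t)C_{\ell,\gamma_t}$ together with $0\leq U\leq b\,C_{\ell,-\gamma}$ and min--max reduces everything to comparing $(1-t)\,e_n(C_{\ell,\gamma_t})$ with $e_n(C_{\ell,\gamma})$. The gap is in your second step, which is where all the analytic content of the lemma lives. You assert the derivative bound $|h_n'(s)|\leq \frac{C_\gamma\gamma^2}{2(n+\ell+1)^2}|F_\gamma'(s)|$ but do not derive it, and the justification you sketch is wrong on a key point: $F_\gamma$ is \emph{not} designed to absorb the coupling-dependence of the optimal constant $C_{\gamma'}$ in \eqref{eq:2.2}. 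That constant enters the argument exactly once, at the fixed coupling $\gamma$, to convert $|e_{n,\ell}(0)|$ into $\gamma^2/(n+\ell+1)^2$; nothing in the paper (or in \cite{Franketal2009}) quantifies how $C_{\gamma'}$ grows as $\gamma'\uparrow 2/\pi$, so a bound that invokes \eqref{eq:2.2} at the moving coupling $\gamma_s$ and then replaces $C_{\gamma_s}$ by $C_\gamma$ cannot be closed with the constants you state.

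What is actually needed --- and what you never supply --- is an estimate of the form $\langle\psi,C_{\ell,-\gamma}\psi\rangle\leq(\mathrm{const})\,|E|$ for eigenfunctions $\psi$ of $C_{\ell,\gamma_s}$ with eigenvalue $E$, with an explicit constant whose $s$-dependence integrates to $F_\gamma$. This is the content of Lemma \ref{boundedmomentum} and Corollary \ref{boundedmomentumcor}: a splitting of $\psi$ into low and high momenta, Kato's and Hardy's inequalities, and Herbst's virial theorem combine to give $\langle\psi,\gamma'|x|^{-1}\psi\rangle\leq\bigl(\tfrac{2A/\pi}{2/\pi-\gamma'}+1\bigr)|E|$, and integrating the resulting logarithmic-derivative bound in the coupling (Proposition \ref{main}) is what produces the exponents $1+A$ and $A$ and the denominator of $F_\gamma$ in \eqref{F} --- not the constant of \eqref{eq:2.2}. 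Your closing paragraph correctly identifies the derivative bound as the delicate step, but identifying it is not proving it: as written, the proposal reduces the lemma to an unproved inequality that carries essentially the full difficulty, and the one concrete mechanism you offer for it is not the one that works.
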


For the proof, we need some preparatory results. We begin with a bound
on the change of Coulomb eigenvalues with the coupling constant which
is the core of our argument.
\begin{proposition}
  \label{main}
  For all $\gamma,\gamma'\in(0,2/\pi)$ with $\gamma\leq\gamma'$ and all
  $n\in\nz_0$
  \begin{equation}
    \label{eq:main1}
    e_n(\sqrt{p^2+1}-1-\gamma' |x|^{-1})
    \geq e_n(\sqrt{p^2+1}-1-\gamma|x|^{-1})\ \left({\gamma'\over\gamma}\right)^{1 +A}\left({\frac2\pi-\gamma\over\frac2\pi-\gamma'}\right)^A.
  \end{equation}
\end{proposition}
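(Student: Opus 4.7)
The proof should go via a differential inequality for $|e_n(\gamma)|$ regarded as a function of $\gamma$, obtained by combining the Hellmann--Feynman identity with a Kato-type operator bound. Writing $H(\gamma) = \sqrt{p^2+1}-1-\gamma|x|^{-1}$, Hellmann--Feynman gives $\partial_\gamma e_n(\gamma) = -\langle\psi_n^{(\gamma)}, |x|^{-1}\psi_n^{(\gamma)}\rangle$, so everything reduces to controlling $\langle|x|^{-1}\rangle$ by $|e_n(\gamma)|$. Starting from Kato's bound $|x|^{-1}\leq(\pi/2)|p|\leq(\pi/2)\sqrt{p^2+1}$ and the identity $\sqrt{p^2+1} = H(\gamma)+1+\gamma|x|^{-1}$, one gets the operator inequality $|x|^{-1}\leq (H(\gamma)+1)/(2/\pi-\gamma)$. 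Taking expectation on $\psi_n^{(\gamma)}$ yields the first-order ``shift'' bound $\partial_\gamma e_n(\gamma)\geq-(e_n(\gamma)+1)/(2/\pi-\gamma)$, which integrates to $e_n(\gamma')+1 \geq \frac{2/\pi-\gamma'}{2/\pi-\gamma}(e_n(\gamma)+1)$.

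By itself the shift bound is too weak because it controls only $e_n+1$: for excited states with $|e_n|\ll 1$ the additive ``$+1$'' dominates and the bound does not decay with $n$ as the Proposition requires. To promote it into a genuinely multiplicative bound on $|e_n|$, the plan is to combine it with a scaling inequality. The pointwise estimate $\sqrt{\alpha^2 p^2+1}-1\leq\alpha^2(\sqrt{p^2+1}-1)$ for $\alpha\geq 1$ is elementary (differentiating in $\alpha$ at $\alpha=1$ gives $(\sqrt{p^2+1}-1)^2/\sqrt{p^2+1}\geq 0$). Conjugation by the $L^2$-dilation $D_\alpha\psi(x) = \alpha^{3/2}\psi(\alpha x)$ then gives $D_\alpha^* H(\gamma)D_\alpha\leq\alpha^2 H(\gamma/\alpha)$, hence the scaling lower bound $|e_n(\gamma')|\geq(\gamma'/\gamma)^2|e_n(\gamma)|$ for $\gamma'\geq\gamma$, which excludes the ``$+1$ dominates'' regime.

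Combining the shift and scaling inequalities via a convex-interpolation argument parametrised by a single real parameter $\tau\in(0,1)$, which is then optimised, should produce the differential inequality
\[
\partial_\gamma\ln|e_n(\gamma)|\leq\frac{1+A}{\gamma}+\frac{A}{2/\pi-\gamma},
\]
with precisely the constant $A = 2+2^{3/2}/[\pi(\sqrt 2-1)]$ of \eqref{A}. Integration from $\gamma$ to $\gamma'$ gives $\ln(|e_n(\gamma')|/|e_n(\gamma)|)\leq(1+A)\ln(\gamma'/\gamma)+A\ln[(2/\pi-\gamma)/(2/\pi-\gamma')]$, and exponentiation yields exactly the claimed multiplicative bound.

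The main obstacle will be verifying that the optimisation produces exactly the displayed value of $A$, rather than some other effective constant. The specific value, with its $\sqrt 2$, is suggestive of an extremum attained at the symmetric point where the Kato constant $2/\pi$ and the ``$+1$'' mass in $\sqrt{p^2+1}$ balance; carrying through the calculation requires keeping careful track of constants in each elementary estimate and of the fact that all of them degrade as $\gamma\to 2/\pi$.
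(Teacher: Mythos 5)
Your overall skeleton---Hellmann--Feynman for $\partial_\gamma e_n(\gamma)=-\langle\psi_n^{(\gamma)},|x|^{-1}\psi_n^{(\gamma)}\rangle$, a differential inequality for $\log|e_n(\gamma)|$, and integration from $\gamma$ to $\gamma'$---is exactly the paper's route, and your final integration step is identical to theirs. The gap is in the one estimate that carries all the weight: you never actually establish $\langle\psi_n^{(\gamma)},|x|^{-1}\psi_n^{(\gamma)}\rangle\leq \mathrm{const}\cdot|e_n(\gamma)|/\gamma$, and the two ingredients you propose cannot be combined to yield it. Your Kato ``shift'' bound gives $\langle|x|^{-1}\rangle\leq(e_n+1)/(\tfrac2\pi-\gamma)$, which is uniform in $n$ and hence too large by a factor of order $(n+\ell+1)^2$ for excited states, as you yourself note. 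Your scaling bound, rewritten in differential form, says $\partial_\gamma e_n\leq 2e_n/\gamma$, i.e.\ $\langle|x|^{-1}\rangle\geq 2|e_n|/\gamma$: it is a \emph{lower} bound on the very quantity you need to bound from \emph{above} (it is essentially the easy half of the virial identity). No convex interpolation of an upper bound that is too weak with a lower bound on the same quantity can produce the required upper bound; the sandwich $2|e_n|/\gamma\leq\langle|x|^{-1}\rangle\leq(1-|e_n|)/(\tfrac2\pi-\gamma)$ is simply consistent with $\langle|x|^{-1}\rangle$ being of order one for all $n$, which would destroy the $n^{-2}$ decay in Lemma \ref{changeinev}.

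The missing idea is the relativistic virial theorem of Herbst, $\langle\psi,p^2(p^2+1)^{-1/2}\psi\rangle=\gamma\langle\psi,|x|^{-1}\psi\rangle$ for eigenfunctions, combined with the momentum-localization estimate of Lemma \ref{boundedmomentum}, which shows that $\langle\psi,(\sqrt{p^2+1}-1)\psi\rangle\leq D\,\langle\psi,(1-(p^2+1)^{-1/2})\psi\rangle$ with $D=\tfrac2\pi A/(\tfrac2\pi-\gamma)$. Writing $p^2(p^2+1)^{-1/2}=(\sqrt{p^2+1}-1)+(1-(p^2+1)^{-1/2})$, these two facts give $\gamma\langle|x|^{-1}\rangle\geq(1+D^{-1})\langle\sqrt{p^2+1}-1\rangle$ and hence $\gamma\langle|x|^{-1}\rangle\leq(D+1)|E|$ (Corollary \ref{boundedmomentumcor}); that is the genuinely multiplicative bound your argument lacks. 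Proving Lemma \ref{boundedmomentum} requires splitting $\psi$ into low- and high-momentum parts and using Kato's inequality, the eigenvalue equation, $E\leq 0$, and Hardy's inequality on the cross term; the constant $A=2+2^{3/2}/(\pi(\sqrt2-1))$ arises from the elementary suprema $\sup_{0\leq e\leq1}\sqrt{e+1}=\sqrt2$ and the factor $(1-1/\sqrt2)^{-1}$ in that splitting, not from any optimisation balancing the Kato constant against the mass term, so your heuristic for where the $\sqrt2$ comes from also does not match.
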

For the proof we will quantify the fact that eigenfunctions live essentially
in a bounded region of momentum space. 
\begin{lemma}
  \label{boundedmomentum}
  For all $\gamma\in(0,2/\pi)$ and all eigenfunctions $\psi$ of
  $\sqrt{p^2+1}-1-\gamma/|x|$
  \begin{equation}
    \label{uvschranke}
    \left\langle\psi,\left(\sqrt{p^2+1}-1\right)\psi\right\rangle
    \leq {\frac2\pi A\over \frac2\pi-\gamma}\left\langle\!\psi,\left(\!1-(p^2+1)^{-1/2}\!\right)\psi\!\right\rangle.
  \end{equation}
\end{lemma}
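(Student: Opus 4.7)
The strategy is to combine three ingredients: the eigenvalue equation tested against $\psi$, the relativistic virial identity, and Kato's inequality $|x|^{-1}\leq(\pi/2)|p|$, together with a momentum-space splitting at $|p|=1$ that produces the $\sqrt 2$-type constants appearing in $A$. Writing $T:=\sqrt{p^2+1}-1$ and letting $E$ denote the eigenvalue corresponding to $\psi$ (assumed normalized), the goal reduces to establishing the inequality $(1-\gamma\pi/2)\langle\psi,T\psi\rangle\leq A(-E)$.

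First, I identify the right-hand side of the claim with a multiple of $(-E)$. Taking the inner product of $(T-\gamma/|x|)\psi = E\psi$ with $\psi$ yields $\langle T\rangle - \gamma\langle|x|^{-1}\rangle = E$. Differentiating $\langle\psi_\lambda,(T-\gamma/|x|)\psi_\lambda\rangle$ at $\lambda=1$, with $\psi_\lambda(x):=\lambda^{3/2}\psi(\lambda x)$, gives the relativistic virial identity $\langle p^2/\sqrt{p^2+1}\rangle = \gamma\langle|x|^{-1}\rangle$. Combining these with the algebraic identity $p^2/\sqrt{p^2+1} = T + (1-(p^2+1)^{-1/2})$ produces $\langle 1-(p^2+1)^{-1/2}\rangle = -E$, so the right-hand side of the claim equals $\tfrac{(2/\pi)A}{(2/\pi)-\gamma}\cdot(-E)$.

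Next, I bound $\langle T\rangle$ by a Kato estimate and a dyadic momentum split. Kato's inequality combined with the eigenvalue identity gives $\langle T\rangle + (-E) = \gamma\langle|x|^{-1}\rangle \leq (\gamma\pi/2)\langle|p|\rangle$, so it suffices to control $\langle|p|\rangle$ from above. On $\{|p|\leq 1\}$, the inequality $\sqrt{p^2+1}\leq\sqrt 2$ yields $T\leq\sqrt 2\,(1-(p^2+1)^{-1/2})$, which in particular controls the low-momentum part of $\langle T\rangle$ by $\sqrt 2\,(-E)$. On $\{|p|>1\}$, the bounds $T\geq\sqrt 2-1$ and $|p|\leq\sqrt{p^2+1}=T+1$ combine into $|p|\leq(\sqrt 2/(\sqrt 2-1))\,T$, while $1-(p^2+1)^{-1/2}\geq 1-1/\sqrt 2$ on this set furnishes the $L^{2}$-mass bound $\|\chi_{|p|>1}\psi\|^{2}\leq(\sqrt 2/(\sqrt 2-1))(-E)$.

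Finally, I combine these pieces into a single linear inequality in $\langle T\rangle$ and $(-E)$ and solve. The main obstacle is that a naive splitting produces the coefficient $(\gamma\pi/2)\cdot\sqrt 2/(\sqrt 2-1)$ on the high-momentum portion of $\langle|p|\rangle$; this exceeds one as $\gamma\to 2/\pi$ and so cannot simply be absorbed into the left-hand side. To close the estimate uniformly in $\gamma\in(0,2/\pi)$ one must route the high-momentum part of $\langle|p|\rangle$ through the $L^{2}$-mass bound $\|\chi_{|p|>1}\psi\|^{2}\leq(\sqrt 2/(\sqrt 2-1))(-E)$ rather than through $\langle T\rangle$; this is the source of the second summand $2^{3/2}/(\pi(\sqrt 2-1))$ in $A$ and constitutes the technical heart of the lemma.
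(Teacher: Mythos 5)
Your preliminary reductions are all correct and consistent with the paper: the virial identity does give $\langle\psi,(1-(p^2+1)^{-1/2})\psi\rangle=-E$, and the elementary bounds $\sqrt{p^2+1}-1\leq\sqrt2\,(1-(p^2+1)^{-1/2})$ on $\{|p|\leq1\}$ and $\|\psi_>\|^2\leq\tfrac{\sqrt2}{\sqrt2-1}(-E)$ for $\psi_>:=\1_{\{|p|>1\}}\psi$ are exactly the ones that appear in the paper's proof. The gap is in your central step. Writing $T:=\sqrt{p^2+1}-1$ and $\psi_<:=\1_{\{|p|\leq1\}}\psi$, after applying Kato's inequality globally you must produce a bound $\langle\psi,|p|\psi\rangle\leq c_1\langle\psi,T\psi\rangle+c_2(-E)$ with $\tfrac{\gamma\pi}{2}c_1\leq1$, and none of your listed tools can deliver it. On $\{|p|>1\}$ the proposed fix (``route through the $L^2$-mass bound'') is not an estimate: $|p|$ is unbounded there, so $\langle\psi_>,|p|\psi_>\rangle$ cannot be dominated by a multiple of $\|\psi_>\|^2$ alone; you would have to use $|p|\leq T+1$ together with the mass bound, which does handle the high-momentum piece with $c_1=1$. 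The region that actually kills the argument is $\{|p|\leq1\}$: there $|p|$ vanishes to \emph{first} order at $p=0$ while $T$ and $1-(p^2+1)^{-1/2}$ vanish to \emph{second} order, so no pointwise domination $|p|\leq c_1T+c_2(1-(p^2+1)^{-1/2})$ exists, and the only structural information you have about $\psi_<$ yields at best $\langle\psi_<,|p|\psi_<\rangle\leq\|\psi_<\|\,\||p|\psi_<\|\leq C\sqrt{-E}$. Since $-E\to0$ along the eigenvalue sequence (and for highly excited, essentially nonrelativistic eigenfunctions $\langle\psi,|p|\psi\rangle$ is genuinely of order $\sqrt{-E}$), the required linear bound $\langle\psi_<,|p|\psi_<\rangle\leq C(-E)$ is false, and your inequality cannot close.

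The idea you are missing is precisely the one the paper's proof turns on. Kato's inequality is applied \emph{only} to $\psi_>$, in the form $\langle\psi_>,|p|\psi_>\rangle\leq\tfrac{2/\pi}{2/\pi-\gamma}\langle\psi_>,(|p|-\gamma|x|^{-1})\psi_>\rangle$, and then the eigenvalue equation is used to replace $\gamma|x|^{-1}\psi$ by $(\sqrt{p^2+1}-1-E)\psi$. The unbounded symbol $|p|$ cancels against $\sqrt{p^2+1}$, leaving the bounded Fourier multiplier $E+|p|-\sqrt{p^2+1}+1\leq2\,(1-(p^2+1)^{-1/2})$ on $\{|p|\geq1\}$ (using $E\leq0$); the price is the localization cross term $\gamma\langle\psi_>,|x|^{-1}\psi_<\rangle$, which is controlled by Hardy's inequality and is the true source of the summand $2^{3/2}/(\pi(\sqrt2-1))$ in $A$. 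On $\{|p|\leq1\}$ the symbol $|p|$ is never introduced at all, which is how the paper sidesteps the low-momentum obstruction described above. Without this cancellation via the eigenvalue equation, the argument as you have set it up does not go through.
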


\begin{proof}
  Let $\psi_>=\1_{\{|p|>1\}}\psi$ and $\psi_<=\1_{\{|p|\leq 1\}}\psi$. Then
  $$
  \left\langle\psi,\left(\sqrt{p^2+1}-1\right)\psi\right\rangle
  = \left\langle\psi_<,\left(\sqrt{p^2+1}-1\right)\psi_<\right\rangle
  + \left\langle\psi_>,\left(\sqrt{p^2+1}-1\right)\psi_>\right\rangle
  $$
  and
  $$
  \left\langle\!\psi,\left(\!1-(p^2+1)^{-1/2}\!\right)\psi\!\right\rangle
  = \left\langle\!\psi_<,\left(\!1-(p^2+1)^{-1/2}\!\right)\psi_<\!\right\rangle
  + \left\langle\!\psi_>,\left(\!1-(p^2+1)^{-1/2}\!\right)\psi_>\!\right\rangle.
  $$
  We have
  $$
  \left\langle\psi_<,\left(\sqrt{p^2+1}-1\right)\psi_<\right\rangle \leq \sqrt 2 \left\langle\psi_<,\left(1-(p^2+1)^{-1/2}\right)\psi_<\right\rangle,
  $$
  since
  $ \sup_{0\leq e\leq 1} (\sqrt{e+1}-1)(1-1/\sqrt{e+1})^{-1} =
  \sup_{0\leq e\leq 1} \sqrt{e+1} = \sqrt 2$.

  Moreover, by Kato's inequality,
  \begin{equation*}
    \left\langle\psi_>,\left(\sqrt{p^2+1}-1\right)\psi_>\right\rangle
    \leq \left\langle\psi_>,|p|\psi_>\right\rangle \leq \frac{2/\pi}{2/\pi-\gamma} \left\langle\psi_>,\left(|p|-\gamma |x|^{-1} \right)\psi_>\right\rangle.
  \end{equation*}
  Now using the eigenvalue equation for $\psi$ we obtain
  \begin{align*}
    \left\langle\psi_>,\left(|p|-\gamma |x|^{-1} \right)\psi_>\right\rangle
    & = \left\langle\psi_>,\left(|p|-\gamma |x|^{-1} \right)\psi\right\rangle +\gamma \left\langle\psi_>,|x|^{-1}\psi_<\right\rangle \\
    & = \left\langle\psi_>,\left(\! E+|p|-\sqrt{p^2+1}+1 \!\right)\psi\right\rangle +\gamma \left\langle\psi_>,|x|^{-1}\psi_<\right\rangle \\
    & = \left\langle\psi_>,\left(\! E+|p|-\sqrt{p^2+1}+1 \!\right)\psi_>\right\rangle +\gamma \left\langle\psi_>,|x|^{-1}\psi_<\right\rangle \\
    & \leq 2 \left\langle\psi_>,\left(1-(p^2+1)^{-1/2}\right)\psi_>\right\rangle +\gamma \left\langle\psi_>,|x|^{-1}\psi_<\right\rangle.
  \end{align*}
  In the last inequality we used $E\leq 0$ (which was shown by Herbst
  \cite[Theorem 2.2]{Herbst1977}) and
  $ \sup_{e\geq 1} (\sqrt e- \sqrt{e+1}+1)(1-1/\sqrt{e+1})^{-1} = 2.$
  
  Moreover, by Hardy's inequality
  $$
  \langle\psi_>,|x|^{-1}\psi_<\rangle
  \leq \|\psi_>\| \||x|^{-1}\psi_<\| \leq 2 \|\psi_>\| \||p|\psi_<\| \leq \|\psi_>\|^2 + \||p|\psi_<\|^2.
  $$
  Since
  $$
  \|\psi_>\|^2 \leq \frac{1}{1-1/\sqrt 2}
  \left\langle\psi_>,\left(1-(p^2+1)^{-1/2}\right)\psi_> \right\rangle
  $$
  and
  $$
  \||p|\psi_<\|^2 \leq \frac{1}{1-1/\sqrt 2}
  \left\langle\psi_<,\left(1-(p^2+1)^{-1/2}\right)\psi_< \right\rangle,
  $$
  we can now collect terms and get
  \begin{align}
    \label{sammlung}
    \left\langle\psi,\left(\sqrt{p^2+1}-1\right)\psi\right\rangle 
    \leq  &\left(\sqrt 2 +{\sqrt2\frac2\pi\gamma\over(\sqrt2-1)(\frac2\pi-\gamma)}\right)\left\langle\psi_<,\left(1-(p^2+1)^{-1/2}\right)\psi_<\right\rangle\nonumber\\
    &+{\frac2\pi\over\frac2\pi-\gamma}\left(2 +{\sqrt2\gamma\over\sqrt2-1} \right)\left\langle\psi_>,\left(1-(p^2+1)^{-1/2}\right)\psi_>\right\rangle\\
    \leq &{\frac2\pi\over\frac2\pi-\gamma}\left(2 +{\sqrt2\gamma\over\sqrt2-1} \right)\left\langle\psi,\left(1-(p^2+1)^{-1/2}\right)\psi\right\rangle\nonumber
      \end{align}
  which gives the desired bound, since $\gamma<2/\pi$.
\end{proof}

\begin{corollary}
  \label{boundedmomentumcor}
  Let $\gamma\in(0,2/\pi)$ and $A$ be the constant of the
  previous lemma. Then, for any normalized
  eigenfunction $\psi$ of $\sqrt{p^2+1}-1-\gamma/|x|$ with
  eigenvalue $E$
  \begin{equation}
    \label{l1}
  \langle \psi, {\gamma\over|x|}\psi\rangle \leq
  \left({\frac2\pi A\over\frac2\pi-\gamma}+1\right)|E|.
  \end{equation}
\end{corollary}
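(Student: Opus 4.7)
The plan is to combine Lemma~\ref{boundedmomentum} with the eigenvalue equation and the virial identity for the Chandrasekhar hydrogen operator. Abbreviate $T_1:=\langle\psi,(\sqrt{p^2+1}-1)\psi\rangle$, $T_2:=\langle\psi,(1-(p^2+1)^{-1/2})\psi\rangle$, and $V:=\langle\psi,|x|^{-1}\psi\rangle$. Testing the eigenvalue equation against $\psi$ yields $T_1-\gamma V = E$, and since $E\leq 0$ (Herbst's theorem, already invoked in the proof of Lemma~\ref{boundedmomentum}) this rearranges to $\gamma V = T_1+|E|$. The task is thus reduced to bounding $T_1$ by a constant multiple of $|E|$.

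For this I would first invoke the preceding lemma, which provides $T_1 \leq \tfrac{(2/\pi)A}{2/\pi-\gamma}\,T_2$, and then establish separately that $T_2 \leq |E|$. The latter bound is a consequence of the virial identity
\[
\langle\psi,\tfrac{p^2}{\sqrt{p^2+1}}\psi\rangle \;=\; \gamma\,\langle\psi,|x|^{-1}\psi\rangle,
\]
which I would derive by observing that the dilation $U_\alpha\psi(x):=\alpha^{3/2}\psi(\alpha x)$ conjugates $\sqrt{p^2+1}-1-\gamma/|x|$ into $H(\alpha):=\sqrt{\alpha^2 p^2+1}-1-\alpha\gamma/|x|$, so that every discrete eigenvalue of $H(\alpha)$ is $\alpha$-independent; Hellmann--Feynman at $\alpha=1$ then yields the displayed identity. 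Decomposing $\frac{p^2}{\sqrt{p^2+1}}=(\sqrt{p^2+1}-1)+(1-(p^2+1)^{-1/2})$ turns this into $T_1+T_2=\gamma V$, which combined with $\gamma V=T_1+|E|$ forces $T_2=|E|$.

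Inserting the two bounds into $\gamma V = T_1+|E|$ gives
\[
\gamma V \;\leq\; \Bigl(\tfrac{(2/\pi)A}{2/\pi-\gamma}+1\Bigr)|E|,
\]
which is exactly \eqref{l1}. The main point needing care is justifying the virial identity for excited Chandrasekhar eigenstates; this is standard via the unitary equivalence $H(\alpha)=U_\alpha^\ast H U_\alpha$ above, but relies on smoothness of $\alpha\mapsto U_\alpha$ acting on the form domain, so that Hellmann--Feynman is applicable eigenvalue by eigenvalue.
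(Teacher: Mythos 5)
Your argument is correct and is essentially the paper's own proof: both combine Lemma~\ref{boundedmomentum} with the virial identity $\langle\psi,\tfrac{p^2}{\sqrt{p^2+1}}\psi\rangle=\gamma\langle\psi,|x|^{-1}\psi\rangle$ and the eigenvalue equation, your version merely solving explicitly for $T_2=|E|$ where the paper takes an equivalent linear combination. The only cosmetic difference is that you rederive the virial identity by dilation while the paper simply cites Herbst's Theorem~2.4.
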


\begin{proof}
  With the abbreviation $D:=\frac2\pi A(\frac2\pi-\gamma)^{-1}$ we write the inequality in the previous lemma in the form
  \begin{equation}
    \label{l2}
    \left\langle\psi,\frac{p^2}{\sqrt{p^2+1}}\psi\right\rangle
    \geq (1+D^{-1})
    \left\langle\psi,\left(\sqrt{p^2+1}-1\right)\psi\right\rangle.
  \end{equation}
  By the virial theorem (Herbst \cite[Theorem 2.4]{Herbst1977}) the
  left sides of \eqref{l1} and \eqref{l2} are equal. Thus,
  \begin{align*}
    &\langle \psi, \gamma|x|^{-1}\psi\rangle
     = (D +1) \langle \psi, \gamma|x|^{-1}\psi\rangle - D \left\langle\psi,\frac{p^2}{\sqrt{p^2+1}}\psi\right\rangle \\
    \leq &(D +1) \langle \psi, \gamma|x|^{-1}\psi\rangle - D (1+D^{-1}) \left\langle\psi,\left(\sqrt{p^2+1}-1\right)\psi\right\rangle \\
     = &- (D+1)\left\langle \psi,\left(\sqrt{p^2+1}-1-\gamma |x|^{-1}\right)\psi\right\rangle  = -(D+1) E
  \end{align*}
  as claimed.
\end{proof}

We are now in position to give the
\begin{proof}[Proof of Proposition \ref{main}]
  By the variational principle, for any $n\in\nz_0$ the function
  $\kappa\mapsto e_n(\sqrt{p^2+1}-1-\kappa|x|^{-1})$ is Lipschitz and
  therefore differentiable almost everywhere. By perturbation theory,
  at every point where its derivative exists, it is given by
  $$
  \frac{\rd}{\rd\kappa} e_n(\sqrt{p^2+1}-1-\kappa|x|^{-1}) =
  - \langle\psi_\kappa,|x|^{-1}\psi_\kappa\rangle \,,
  $$
  where $\psi_\kappa$ is a normalized eigenfunction of
  $\sqrt{p^2+1}-1-\kappa|x|^{-1}$ corresponding to the eigenvalue
  $e_n(\sqrt{p^2+1}-1-\kappa|x|^{-1})$.  Thus, by Corollary
  \ref{boundedmomentumcor}, we have for all $\kappa\in(0,\gamma']$
  $$
  \frac{\rd}{\rd\kappa} e_n(\sqrt{p^2+1}-1-\kappa|x|^{-1}) \geq
  \left({\frac2\pi A\over\frac2\pi-\kappa}+1\right) \kappa^{-1}
  e_n(\sqrt{p^2+1}-1-\kappa|x|^{-1}).
  $$
  Thus,
  \begin{align}
    \label{eq:derivativeev}
    \frac{\rd}{\rd\kappa}\log |e_n(\sqrt{p^2+1}-1-\kappa|x|^{-1})|
    \leq {A+1\over\kappa}+{A\over\frac2\pi-\kappa}.
  \end{align}
  Integrating this bound we find for $\gamma\leq\gamma'$ that
  $$
  \log \frac{|e_n(\sqrt{p^2+1}-1-\gamma'|x|^{-1})|}{|e_n(\sqrt{p^2+1}-1-\gamma|x|^{-1})|}
  \leq (A+1)\log\frac{\gamma'}{\gamma} -A\log{\frac2\pi-\gamma'\over\frac2\pi-\gamma},
  $$
  i.e.,
  $$
  e_n(\sqrt{p^2+1}-1-\gamma'|x|^{-1}) \geq
  e_n(\sqrt{p^2+1}-1-\gamma|x|^{-1})\ \left( {\gamma'\over\gamma}
  \right)^{A+1}\left({\frac2\pi-\gamma\over\frac2\pi-\gamma'}\right)^{A}$$
  quod erat demonstrandum.
\end{proof}

Eventually we can address the
\begin{proof}[Proof of Lemma \ref{changeinev}]
  First let $0<\lambda\leq t_0/b$.  Then
  $$
  \sqrt{p_\ell^2+1}-1-\frac\gamma r-\lambda U \geq (1-b\lambda)
  \left( \sqrt{p_\ell^2+1}-1-{1+b\lambda\over
      1-b\lambda}\cdot{\gamma\over r} \right)
  $$
  and therefore for all $n\in\nz_0$,
  $$
  e_n\left(\sqrt{p_\ell^2+1}-1-\frac\gamma r-\lambda U\right) \geq
  (1-b\lambda) \ e_n\left(
    \sqrt{p_\ell^2+1}-1-{1+b\lambda\over1-b\lambda}\cdot\frac\gamma r
  \right).
  $$
  By Proposition \ref{main} with
  $\gamma'=\gamma (1+b\lambda)/(1-b\lambda)$ (which fulfills
  $\gamma<\gamma'\leq \frac1\pi+\frac\gamma2<\frac2\pi$ under our
  assumptions)
  \begin{align*}
 & e_n\left( \sqrt{p_\ell^2+1}-1-{1+b\lambda \over 1-b\lambda}\cdot {\gamma\over r}\right)\\
    \geq &e_n\left( \sqrt{p_\ell^2+1}-1-\frac\gamma r\right) \cdot \left({1+b\lambda\over 1-b\lambda} \right)^{1+A}\left({\frac2\pi-\gamma\over \frac2\pi-{1+b \lambda\over 1-b\lambda}\gamma}\right)^{A}.
  \end{align*}
  Combining the previous two inequalities shows that
  \begin{equation*}
    e_{n,\ell}(\lambda)    \geq  F_{\gamma}(\lambda b) e_{n,\ell}(0)
  \end{equation*}
  and therefore, if $\lambda\leq t_0/b$,
  \begin{align*}
    & e_{n,\ell}(\lambda) - e_{n,\ell}(0)
    \geq (F_{\gamma}(\lambda b) - F_\gamma(0)) \ e_{n,\ell}(0)
    = \int_0^{\lambda b} F'_{\gamma}(t)\rd t \ e_{n,\ell}(0)\\
    \geq & \tilde M_{\gamma}\cdot b \cdot \lambda \cdot e_{n,\ell}(0)\geq - M_{\gamma}\cdot b \cdot \lambda  \cdot \frac{\gamma^2}{(n+\ell+1)^2}.
  \end{align*}
  In  the last inequality we used the lower bound \eqref{M}.

  The case of negative $\lambda$ is similar.
\end{proof}

\section*{Acknowledgments}

The authors warmly thank Barry Simon for his initial contributions and
continuing support and interest in the relativistic strong Scott
conjecture. They also acknowledge partial support by the U.S. National
Science Foundation through grants DMS-1363432 and DMS-1954995
(R.L.F.), by the Deutsche Forschungsgemeinschaft (DFG, German Research
Foundation) through grant SI 348/15-1 (H.S.) and through Germany's
Excellence Strategy – EXC-2111 – 390814868 (R.L.F., H.S.).
One of us (K.M.) would like to thank the organizers of the program
\textit{Density Functionals for Many-Particle Systems: Mathematical Theory and
  Physical Applications of Effective Equations}, which took place at the
Institute for Mathematical Sciences (IMS) at the
National University of Singapore (NUS), for their invitation to speak, their
kind hospitality, as well as for generous financial support by the Julian
Schwinger foundation that made his stay possible.


\begin{thebibliography}{10}

\bibitem{Baumgartner1976}
Bernhard Baumgartner.
\newblock The {T}homas-{F}ermi-theory as result of a strong-coupling-limit.
\newblock {\em Comm. Math. Phys.}, 47(3):215--219, 1976.

\bibitem{Franketal2019P}
Rupert~L. Frank, Konstantin Merz, Heinz Siedentop, and Barry Simon.
\newblock Proof of the strong {S}cott conjecture for {C}handrasekhar atoms.
\newblock {\em Pure and Applied Functional Analysis, preprint
  arXiv:1907.04894}, In press, 2019.

\bibitem{Franketal2008}
Rupert~L. Frank, Heinz Siedentop, and Simone Warzel.
\newblock The ground state energy of heavy atoms: Relativistic lowering of the
  leading energy correction.
\newblock {\em Comm. Math. Phys.}, 278(2):549--566, 2008.

\bibitem{Franketal2009}
Rupert~L. Frank, Heinz Siedentop, and Simone Warzel.
\newblock The energy of heavy atoms according to {B}rown and {R}avenhall: the
  {S}cott correction.
\newblock {\em Doc. Math.}, 14:463--516, 2009.

\bibitem{Herbst1977}
Ira~W.\ Herbst.
\newblock Spectral theory of the operator {$(p^2+m^2)^{1/2} - Ze^2/r$}.
\newblock {\em Comm.\ Math.\ Phys.}, 53:285--294, 1977.

\bibitem{Iantchenkoetal1996}
Alexei Iantchenko, Elliott~H.\ Lieb, and Heinz Siedentop.
\newblock Proof of a conjecture about atomic and molecular cores related to
  {S}cott's correction.
\newblock {\em J.\ reine angew.\ Math.}, 472:177--195, March 1996.

\bibitem{Kato1966}
Tosio Kato.
\newblock {\em Perturbation Theory for Linear Operators}, volume 132 of {\em
  Grundlehren der mathematischen {W}issenschaften}.
\newblock Springer-Verlag, Berlin, 1 edition, 1966.

\bibitem{Lewisetal1997}
Roger~T.\ Lewis, Heinz Siedentop, and Simeon Vugalter.
\newblock The essential spectrum of relativistic multi-particle operators.
\newblock {\em Annales de l'Institut Henri Poincar{\'e}}, 67(1):1--28, 1997.

\bibitem{Lieb1981}
Elliott~H.\ Lieb.
\newblock {T}homas-{F}ermi and related theories of atoms and molecules.
\newblock {\em Rev.\ Mod.\ Phys.}, 53(4):603--641, October 1981.

\bibitem{LiebSimon1977}
Elliott~H. Lieb and Barry Simon.
\newblock The {T}homas-{F}ermi theory of atoms, molecules and solids.
\newblock {\em Advances in Math.}, 23(1):22--116, 1977.

\bibitem{Luders1951}
Gerhart L{\"u}ders.
\newblock \"{U}ber die {Z}ustands\"anderung durch den {M}e{\ss}proze{\ss}.
\newblock {\em Ann. Physik (6)}, 8:322--328, 1951.

\bibitem{Messiah1969}
Albert Messiah.
\newblock {\em M{\'e}canique Quantique}, volume~1.
\newblock Dunod, Paris, 2 edition, 1969.

\bibitem{Solovejetal2008}
Jan~Philip Solovej, Thomas {\O stergaard S{\o}rensen}, and Wolfgang~L. Spitzer.
\newblock The relativistic {S}cott correction for atoms and molecules.
\newblock {\em Commun. Pure Appl. Math.}, 63:39--118, January 2010.

\bibitem{Weder1974}
R.~A. Weder.
\newblock Spectral properties of one-body relativistic spin-zero
  {H}amiltonians.
\newblock {\em Ann. Inst. H. Poincar\'e Sect. A (N.S.)}, 20:211--220, 1974.

\end{thebibliography}

\def\cprime{$'$}

\end{document}